\documentclass[copyright]{eptcs}
\usepackage{breakurl}             

\usepackage{amsmath}
\usepackage{amssymb}
\usepackage{graphics}
\usepackage{amsthm}

\setcounter{MaxMatrixCols}{10}

\newtheorem{theorem}{Theorem}

\title{Graph-Controlled Insertion-Deletion Systems}

\author{Rudolf Freund
\institute{Faculty of Informatics, Vienna University of Technology,
Favoritenstr. 9, 1040 Vienna, Austria
}
\email{rudi@emcc.at}
\and
Marian Kogler
\institute{Faculty of Informatics, Vienna University of Technology,
Favoritenstr. 9, 1040 Vienna, Austria\\
Institute of Computer Science, Martin Luther University Halle-Wittenberg\\
Von-Seckendorff-Platz 1, 06120 Halle (Saale), Germany
\email{marian@emcc.at, kogler@informatik.uni-halle.de}
}
\and
Yurii Rogozhin
\institute{
Institute of Mathematics and Computer Science,
Academy of Sciences of Moldova\\
Str. Academiei 5, Chi\c sin\u au, MD-2028, Moldova}
\email{rogozhin@math.md}
\and
Sergey Verlan
\institute{
LACL, D\'{e}partement Informatique,
UFR Sciences et Technologie\\
Universit\'{e} Paris Est,
61, av. G\'{e}n\'{e}ral de Gaulle, 94010 Cr\'{e}teil, France}
\email{verlan@univ-paris12.fr}
}



\begin{document}

\maketitle

\begin{abstract}
In this article, we consider the operations of insertion and deletion
working in a graph-controlled manner. We show that like in the case of
context-free productions, the computational power is strictly increased when
using a control graph: computational completeness can be obtained by systems
with insertion or deletion rules involving at most two symbols in a
contextual or in a context-free manner and with the control graph having
only~four nodes.
\end{abstract}

\section{Introduction}

The operations of insertion and deletion were first considered with a
linguistic motivation~\cite{Marcus,Galiuk,Kluwer}. Another inspiration for
these operations comes from the fact that the insertion operation and its
iterated variants are generalized versions of Kleene's operations of
concatenation and closure~\cite{Kleene56}, while the deletion operation
generalizes the quotient operation. A study of properties of the
corresponding operations may be found in~\cite{Haussler82,Haussler83,Kari}.
Insertion and deletion also have interesting biological
motivations, e.g., they correspond to a mismatched annealing of DNA
sequences; these operations are also present in the evolution processes in
the form of point mutations as well as in RNA editing, see the discussions
in~\cite{Beene,BBD07,Smith} and~\cite{dna}. These biological motivations of
insertion-deletion operations led to their study in the framework of
molecular computing, see, for example, \cite{Daley,cross,dna,TY}.

In general, an insertion operation means adding a substring to a given
string in a specified (left and right) context, while a deletion operation
means removing a substring of a given string from a specified (left and
right) context. A finite set of insertion-deletion rules, together with a
set of axioms provide a language generating device: starting from the set of
initial strings and iterating insertion-deletion operations as defined by
the given rules, one obtains a language.

Even in their basic variants, insertion-deletion systems are able to
characterize the recursively enumerable languages. Moreover, as it was shown
in \cite{cfinsdel}, the context dependency may be replaced by insertion and
deletion of strings of sufficient length, in a context-free manner. If the
length is not sufficient (less or equal to two) then such systems are
not able to generate more than the recursive languages and a characterization of them was shown in~\cite{SV2-2}.

Similar investigations were continued in \cite{MRV07,KRV08,KRV08c} on
insertion-deletion systems with one-sided contexts, i.e., where the context
dependency is present only from the left or only from the right side of all
insertion and deletion rules. The papers cited above give several
computational completeness results depending on the size of 
insertion and deletion rules. We recall the interesting fact that some
combinations are not leading to computational completeness, i.e., there are
recursively enumerable languages that cannot be generated by such devices.

Like in the case of context-free rewriting, it is possible to consider a
graph-controlled variant of insertion-deletion systems. Thus the rules
cannot be applied at any time, as their applicability depends on the current
\textquotedblleft state\textquotedblright , changed by a rule application.
There are several equivalent definitions of a graph-controlled application,
we consider one of them where rules are grouped in \emph{components}
(communication graph nodes) and at each step one of the rules from the
current component is chosen non-deterministically and applied, at the same
time changing the current component. Such a formalization is rather similar
to the definition of insertion-deletion P systems~\cite{membr}, however it
is even simpler and more natural.

This article focuses on one-sided graph-controlled insertion-deletion
systems where at most two symbols may be present in the description of
insertion and deletion rules. This correspond to systems of size $%
(1,1,0;1,1,0)$, $(1,1,0;1,0,1)$, $(1,1,0;2,0,0)$, and $(2,0,0;1,1,0)$, where
the first three numbers represent the maximal size of the inserted string
and the maximal size of the left and right contexts, while the last three
numbers represent the same information, but for deletion rules. It is known
that such systems are not computationally complete~\cite{KRV09}, while the
corresponding P systems variants are computationally complete, which results
are achieved with five components. In this article we give a simpler
definition of the concept of graph-controlled insertion-deletion systems and
we show that computational completeness can already be achieved by using a
control graph with only four nodes (components).

\section{Definitions}

We do not present the usual definitions concerning standard concepts of the
theory of formal languages and we only refer to textbooks such as \cite%
{handbook} for more details.

The empty string is denoted by $\lambda $. For the interval of natural
numbers from $k$ to $m$ we write $\left[ k..m\right] $.

In the following, we will use special variants of the \emph{Geffert} normal
form for type-0 grammars (see~\cite{Geffert91} for more details).

A grammar $G=\left( N,T,P,S\right) $ is said to be in \emph{Geffert normal
form}~\cite{Geffert91} if $N=\{S,A,B,C,D\}$ and $P$ only contains
context-free rules of the forms $S\rightarrow uSv$ with 
$u\in \{A,C\}^{\ast }$ and $v\in \{B,D\}^{\ast }$ as well as 
$S\rightarrow x$ with $x\in (T\cup \{A,B,C,D\})^{\ast }$ and two 
(non-context-free) erasing rules $AB\rightarrow \lambda $ and 
$CD\rightarrow \lambda $.

We remark that we can easily transform the linear rules from the Geffert
normal form into a set of left-linear and right-linear rules (by increasing
the number of non-terminal symbols, e.g., see \cite{membr}). More
precisely, we say that a grammar $G=\left( N,T,P,S\right) $ with 
$N=N^{\prime }\cup N^{\prime \prime }$, $S,S^{\prime }\in N^{\prime }$, and $N^{\prime \prime }=\{A,B,C,D\}$, is in the \emph{special Geffert normal 
form} if, besides the two erasing rules $AB\rightarrow \lambda $ and 
$CD\rightarrow \lambda $, it only has context-free rules of the following 
forms:

\begin{align*}
& X\rightarrow bY,\quad X,Y\in N^{\prime },b\in T\cup N^{\prime \prime }, \\
& X\rightarrow Yb,\quad X,Y\in N^{\prime },b\in T\cup N^{\prime \prime }, \\
& S^{\prime }\rightarrow \lambda .
\end{align*}

Moreover, we may even assume that, except for the rules of the forms $%
X\rightarrow Sb$ and $X\rightarrow S^{\prime }b$, for the first two types of
rules it holds that the right-hand side is unique, i.e., for any two rules $%
X\rightarrow w$ and $U\rightarrow w$ in $P$ we have $U=X$.

The computation in a grammar in the special Geffert normal form is done in
two stages. During the first stage, only context-free rules are applied.
During the second stage, only the erasing rules $AB\rightarrow \lambda $ and 
$CD\rightarrow \lambda $ are applied. These two erasing rules are not
applicable during the first stage as long as the left and the right part of
the current string are still separated by $S$ (or $S^{\prime }$) as all the 
symbols $A$ and $C$ are generated on the left side of these middle symbols 
and the corresponding symbols $B$ and $D$ are generated on the right side. 
The transition between stages is done by the rule $S^{\prime }\rightarrow
\lambda $. We remark that all these features of a grammar in the special
Geffert normal form are immediate consequences of the proofs given in~\cite%
{Geffert91}.

\subsection{Insertion-deletion systems}

An \textit{insertion-deletion system} is a construct $ID=(V,T,A,I,D)$, where 
$V$ is an alphabet; $T\subseteq V$ is the set of \textit{terminal} symbols
(in contrast, those of $V-T$ are called \textit{non-terminal} symbols); $A$
is a finite language over $V$, the strings in $A$ are the \textit{axioms}; $%
I,D$ are finite sets of triples of the form $(u,\alpha ,v)$, where $u$, $%
\alpha $ ($\alpha \neq \lambda $), and $v$ are strings over $V$. The triples
in $I$ are \textit{insertion rules}, and those in $D$ are \textit{deletion
rules}. An insertion rule $(u,\alpha ,v)\in I$ indicates that the string $%
\alpha $ can be inserted between $u$ and $v$, while a deletion rule $%
(u,\alpha ,v)\in D$ indicates that $\alpha $ can be removed from between the
context $u$ and $v$. Stated in another way, $(u,\alpha ,v)\in I$ corresponds
to the rewriting rule $uv\rightarrow u\alpha v$, and $(u,\alpha ,v)\in D$
corresponds to the rewriting rule $u\alpha v\rightarrow uv$. By $%
\Longrightarrow _{ins}$ we denote the relation defined by the insertion
rules (formally, $x\Longrightarrow _{ins}y$ if and only if $%
x=x_{1}uvx_{2},y=x_{1}u\alpha vx_{2}$, for some $(u,\alpha ,v)\in I$ and $%
x_{1},x_{2}\in V^{\ast }$), and by $\Longrightarrow _{del}$ the relation
defined by the deletion rules (formally, $x\Longrightarrow _{del}y$ if and
only if $x=x_{1}u\alpha vx_{2},y=x_{1}uvx_{2}$, for some $(u,\alpha ,v)\in D$
and $x_{1},x_{2}\in V^{\ast }$). By $\Longrightarrow $ we refer to any of
the relations $\Longrightarrow _{ins},\Longrightarrow _{del}$, and by $%
\Longrightarrow ^{\ast }$ we denote the reflexive and transitive closure of $%
\Longrightarrow $.

The language generated by $ID$ is defined by 
\begin{equation*}
L(ID)=\{w\in T^{\ast }\mid x\Longrightarrow ^{\ast }w\mathrm{\ for\ some\ }%
x\in A\}.
\end{equation*}

The complexity of an insertion-deletion system $ID=(V,T,A,I,D)$ is described
by the vector $$(n,m,m^{\prime };p,q,q^{\prime })$$ called \emph{size}, where 
\vspace{-2mm} 
\begin{eqnarray*}
n=\max\{|\alpha|\mid (u,\alpha,v)\in I\}, & & p=\max\{|\alpha|\mid
(u,\alpha,v)\in D\}, \\
m=\max\{|u|\mid (u,\alpha,v)\in I\}, & & q=\max\{|u|\mid (u,\alpha,v)\in D\},
\\
m^{\prime }=\max\{|v|\mid (u,\alpha,v)\in I\}, & & q^{\prime }=\max\{|v|\mid
(u,\alpha,v)\in D\}.
\end{eqnarray*}

By $INS_{n}^{m,m^{\prime }}DEL_{p}^{q,q^{\prime }}$ we denote the families
of insertion-deletion systems having size $(n,m,m^{\prime
};p,q,q^{\prime })$.

If one of the parameters $n,m,m^{\prime },p,q,q^{\prime }$ is not specified,
then instead we write the symbol~$\ast $. In particular, $INS_{\ast
}^{0,0}DEL_{\ast }^{0,0}$ denotes the family of languages generated by \emph{%
context-free insertion-deletion systems}. If one of numbers from the pairs $%
m $, $m^{\prime }$ and/or $q$, $q^{\prime }$ is equal to zero (while the
other one is not), then we say that the corresponding families have a
one-sided context. Finally we remark that the rules from $I$ and $D$ can be
put together into one set of rules $R$ by writing $\left( u,\alpha ,v\right)
_{ins}$ for $\left( u,\alpha ,v\right) \in I$ and $\left( u,\alpha ,v\right)
_{del}$ for $\left( u,\alpha ,v\right) \in D$.

\subsection{Graph-controlled insertion-deletion systems}

Like context-free grammars, insertion-deletion systems may be extended by
adding some additional controls. We discuss here the adaptation of the idea
of programmed and graph-controlled grammars for insertion-deletion systems.

A \emph{graph-controlled insertion-deletion system} is a construct%
\begin{equation*}
\Pi =(V,T,A,H,I_{0},I_{f},R)\mathrm{\ where}
\end{equation*}

\begin{itemize}
\item $V$ is a finite alphabet,

\item $T\subseteq V$ is the \emph{terminal alphabet},

\item $A\subseteq V^{\ast }$ is a finite set of \emph{axioms},

\item $H$ is a set of labels associated (in a one-to-one manner) to the
rules in $R$,

\item $I_{0}\subseteq H$ is the set of \emph{initial labels},

\item $I_{f}\subseteq H$ is the set of \emph{final labels}, and

\item $R\ $is a finite set of rules of the form $l:\left( r,E\right) $ where 
$r$ is an insertion or deletion rule over $V$ and $E\subseteq H$.
\end{itemize}

As is common for graph controlled systems, a configuration of $\Pi $ is
represented by a pair $(i,w)$, where $i$ is the label of the rule to be
applied and $w$ is the current string. A transition $(i,w)\Rrightarrow
(j,w^{\prime })$ is performed if there is a rule $l:\left( \left( u,\alpha
,v\right) _{t},E\right) $ in $R$ such that $w\Longrightarrow _{t}w^{\prime }$
by the insertion/deletion rule $(u,\alpha ,v)_{t}$, $t\in \left\{
ins,del\right\} $, and $j\in E$. The result of the computation consists of
all terminal strings reaching a final label from an axiom and the initial
label, i.e., 
\begin{equation*}
L(\Pi )=\{w\in T^{\ast }\mid (i_{0},w^{\prime })\Rrightarrow ^{\ast
}(i_{f},w)\mathrm{\ for\ some\ }w^{\prime }\in A,\mathrm{\ }i_{0}\in I_{0},%
\mathrm{\ }i_{f}\in I_{f}\}.
\end{equation*}

We may use another rather similar definiton for a graph-controlled
insertion-deletion system, thereby assigning groups of rules to \textit{%
components} of the system:

A \emph{graph-controlled insertion-deletion system with }$\emph{k}$\emph{\
components} is a construct%
\begin{equation*}
\Pi =(k,V,T,A,H,i_{0},i_{f},R)\mathrm{\ where}
\end{equation*}

\begin{itemize}
\item $k$ is the number of components,

\item $V,T,A,H$ are defined as for graph-controlled insertion-deletion
systems,

\item $i_{0}\in \left[ 1..k\right] $ is the initial component,

\item $i_{f}\in \left[ 1..k\right] $ is the final component, and

\item $R\ $is a finite set of rules of the form $l:\left( i,r,j\right) $
where $r$ is an insertion or deletion rule over $V$ and $i,j\in \left[ 1..k%
\right] $.
\end{itemize}

The set of rules $R$ may be divided into sets $R_{i}$ assigned to the \emph{%
components }$i\in \left[ 1..k\right] $, i.e., $R_{i}=\left\{ l:\left(
r,j\right) \mid l:\left( i,r,j\right) \in R\right\} $; in a rule $l:\left(
i,r,j\right) $, the number $j$ specifies the \emph{target component} where
the string is sent from component $i$ after the application of the insertion
or deletion rule $r$. A configuration of $\Pi $ is represented by a pair $%
(i,w)$, where $i$ is the number of the \emph{current} component (initially $%
i_{0}$) and $w$ is the current string. We also say that $w$ is \emph{situated%
} in component $i$. A transition $(i,w)\Rrightarrow (j,w^{\prime })$ is
performed as follows: first, a rule $l:\left( r,j\right) $ from component $i$
(from the set $R_{i}$) is chosen in a non-deterministic way, the rule $r$ is
applied, and the string is moved to component $j$; hence, the new set from
which the next rule to be applied will be chosen is $R_{j}$. More formally, $%
(i,w)\Rrightarrow (j,w^{\prime })$ if there is $l:\left( \left( u,\alpha
,v\right) _{t},j\right) \in R_{i}$ such that $w\Longrightarrow _{t}w^{\prime
}$ by the rule $\left( u,\alpha ,v\right) _{t}$; we also write $%
(i,w)\Rrightarrow _{l}(j,w^{\prime })$ in this case. The result of the
computation consists of all terminal strings situated in component $i_{f}$
reachable from the axiom and the initial component, i.e., 
\begin{equation*}
L(\Pi )=\{w\in T^{\ast }\mid (i_{0},w^{\prime })\Rrightarrow ^{\ast
}(i_{f},w)\mathrm{\ for\ some\ }w^{\prime }\in A\}.
\end{equation*}

Is is not difficult to see that graph-controlled insertion-deletion systems
with $k$ components are a special case of graph-controlled
insertion-deletion systems. Without going into technical details, we just
give the main ideas how to obtain a graph-controlled insertion-deletion
system from a graph-controlled insertion-deletion system with $k$
components: for every $l:\left( \left( u,\alpha ,v\right) _{t},j\right) \in
R_{i}$ we take a rule $l:\left( i,\left( u,\alpha ,v\right) _{t},Lab\left(
R_{j}\right) \right) $ into $R$ where $Lab\left( R_{j}\right) $ denotes the
set of labels for the rules in $R_{j}$; moreover, we take $I_{0}=Lab\left(
R_{i_{0}}\right) $ and $I_{f}=Lab\left( R_{i_{f}}\right) $. Finally, we
remark that the labels in a graph-controlled insertion-deletion system with $%
k$ components may even be omitted, but they are useful for specific proof
constructions. On the other hand, by a standard powerset construction for
the labels (as used for the determinization of non-deterministic finite
automata) we can easily prove the converse inclusion, i.e., that for any
graph-controlled insertion-deletion system we can construct an equivalent
graph-controlled insertion-deletion system with $k$ components.

We define the \emph{communication graph} of a graph-controlled
insertion-deletion system with $k$ components to be the graph with nodes $%
1,\dots {},k$ having an edge from node $i$ to node $j$ if and only if there
exists a rule $l:\left( \left( u,\alpha ,v\right) _{t},j\right) \in R_{i}$.
In~\cite{membr}, 5.5, special emphasis is laid on graph-controlled
insertion-deletion systems with $k$ components whose communication graph has
a tree structure, as we observe that the presentation of graph-controlled
insertion-deletion systems with $k$ components given above in the case of a
tree structure is rather similar to the definition of insertion-deletion P
systems as given in \cite{membr}; the main differences are that in P systems
the final component $i_{f}$ contains no rules and corresponds with the root
of the communication tree; on the other hand, in graph-controlled
insertion-deletion system with $k$\ components, each of the axioms can only
be situated in the initial component $i_{0}$, whereas in P systems we may
situate each axiom in various different components.

Throughout the rest of this paper we shall only use the notion of
graph-controlled insertion-deletion systems with $k$ components, as they are
easier to handle and sufficient to establish computational completeness in
the proofs of our main results presented in the succeeding section. By $%
GCL_{k}(ins_{n}^{m,m^{\prime }},del_{p}^{q,q^{\prime }})$ we denote the
family of languages $L(\Pi )$ generated by graph-controlled
insertion-deletion systems with at most $k$ components and insertion and
deletion rules of size at most $(n,m,m^{\prime };p,q,q^{\prime })$. We
replace $k$ by $\ast $ if $k$ is not fixed. The letter \textquotedblleft
G\textquotedblright\ is replaced by the letter \textquotedblleft
T\textquotedblright\ to denote classes whose communication graph has a \emph{%
tree structure}. Some results for the families $TCL_{k}(ins_{n}^{m,m^{\prime
}},del_{p}^{q,q^{\prime }})$ can directly be derived from the results
presented in~\cite{KRV09,membr} for the corresponding families of
insertion-deletion P systems $ELSP_{k}(ins_{n}^{m,m^{\prime
}},del_{p}^{q,q^{\prime }})$, yet the results we present in the succeeding
section either reduce the number of components for systems with an
underlying tree structure or else take advantage of the arbitrary structure
of the underlying communication graph thus obtaining computational
completeness for new restricted variants of insertion and deletion rules.

\section{Main results}

For all the variants of insertion and deletion rules considered in this
section, we know that the basic variants without using control graphs cannot achieve computational completeness (see \cite{KRV09}, \cite{MRV07}). The
computational completeness results from this section are based on
simulations of derivations of a grammar in the special Geffert normal form.
These simulations associate a group of insertion and deletion rules to each
of the right- or left-linear rules $X\rightarrow bY$ and $X\rightarrow Yb$.
The same holds for (non-context-free) erasing rules $AB\rightarrow \lambda $
and $CD\rightarrow \lambda $. We remark that during the derivation of a
grammar in the special Geffert normal form, any sentential form contains at
most one non-terminal symbol from $N^{\prime }$.

\medskip

We start with the following theorem where we even obtain a linear tree
structure for the underlying communication graph.

\begin{theorem}
\label{thm:110200} $TCL_{4}(ins_{1}^{1,0},del_{2}^{0,0})=RE.$
\end{theorem}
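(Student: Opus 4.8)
The inclusion $TCL_{4}(ins_{1}^{1,0},del_{2}^{0,0})\subseteq RE$ is immediate, since such a system is an effective string-rewriting device and can be simulated by a Turing machine. The work is therefore to prove $RE\subseteq TCL_{4}(ins_{1}^{1,0},del_{2}^{0,0})$. The plan is to take an arbitrary type-$0$ language, fix a grammar $G=(N,T,P,S)$ for it in the special Geffert normal form (available by the discussion preceding this theorem), and construct a graph-controlled insertion-deletion system $\Pi$ whose communication graph is a linear tree on the four nodes $1,2,3,4$ (a path $1-2-3-4$, with self-loops and back-edges along the path), so that $\Pi$ faithfully simulates the derivations of $G$. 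Throughout I will rely on the invariant recalled in the excerpt: every sentential form of $G$ contains \emph{at most one} symbol of $N^{\prime }$.

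\textbf{Simulating the rules.} For a context-free rule $p$ of the form $X\rightarrow bY$ or $X\rightarrow Yb$ I introduce a fresh, rule-specific marker $K_{p}$ and realize $p$ by four operations, all of the allowed sizes. First I insert $K_{p}$ immediately to the right of the unique symbol $X$ by the rule $(X,K_{p},\lambda )_{ins}$ of size $(1,1,0)$; this step \emph{commits} the computation to the single rule $p$. Then I insert the two right-hand symbols after $K_{p}$, using $K_{p}$ as left context so that they land in the correct positions: for $X\rightarrow bY$ I first insert $Y$ and then $b$, obtaining $X\,K_{p}\,b\,Y$, while for $X\rightarrow Yb$ I first insert $b$ and then $Y$, obtaining $X\,K_{p}\,Y\,b$. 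Finally the length-two block $X\,K_{p}$ is removed by the single context-free deletion $(\lambda ,X\,K_{p},\lambda )_{del}$ of size $(2,0,0)$, leaving exactly $bY$, respectively $Yb$, in place of $X$. These four operations are distributed along the path so that the deleting rule becomes \emph{reachable only after both insertions have been performed}. The two erasing rules $AB\rightarrow \lambda$ and $CD\rightarrow \lambda$ and the stage-switching rule $S^{\prime }\rightarrow \lambda$ are simulated directly by the size-at-most-two context-free deletions $(\lambda ,AB,\lambda )_{del}$, $(\lambda ,CD,\lambda )_{del}$, and $(\lambda ,S^{\prime },\lambda )_{del}$, placed as self-loops at the component(s) holding a completed sentential form.

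\textbf{Correctness and the main obstacle.} Completeness (that every derivation of $G$ can be mimicked by $\Pi$) is routine once the routing is fixed. The delicate part, which I expect to be the main obstacle, is \emph{soundness}: showing that no cross-talk between different rules and no premature or partial application can ever deliver a terminal string outside $L(G)$. The marker $K_{p}$ is the key device here. Because it is created with left context $X$ and names $p$ uniquely, as soon as $K_{p}$ appears the only continuation able to reach the final component is the one inserting precisely the symbols of $p$ and then deleting $X\,K_{p}$; the uniqueness of right-hand sides in the special Geffert normal form forbids recombining symbols belonging to distinct rules. I would then argue that any deviation is fatal: deleting $X\,K_{p}$ before both symbols are present, or inserting a non-matching symbol, leaves the string either without an $N^{\prime }$ symbol too early or with a scrambled block over $\{A,C\}$ and $\{B,D\}$ that the erasing deletions can never reduce to a string in $T^{\ast }$, so that such branches simply die. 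During the second stage no symbol of $N^{\prime }$ is present, hence no $(X,K_{p},\lambda )_{ins}$ can fire and only the erasing deletions act, exactly reproducing Geffert's second stage.

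\textbf{Why four nodes suffice.} The single feature that keeps the node count at four rather than five is the availability of the length-two deletion: it lets one component both \emph{finish} the current simulation (by erasing the whole block $X\,K_{p}$ in one step) and \emph{host} the start of the next one, while the insertions only travel outward along the path and the committing marker guarantees that the cycle closes correctly. Keeping all edges between path-adjacent nodes (together with self-loops for the size-$\le 2$ deletions) preserves the linear tree structure, which yields the claimed membership in $TCL_{4}$ and hence the equality $TCL_{4}(ins_{1}^{1,0},del_{2}^{0,0})=RE$.
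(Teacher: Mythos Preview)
Your high-level plan---insert a rule-specific marker after $X$, push in the two right-hand symbols using that marker as left context, then erase the two-symbol block $XK_{p}$ by one context-free deletion---is exactly the paper's. Where your proposal falls short is the routing on the path $1$--$2$--$3$--$4$. You assert that the four operations can be placed so that the deletion is reachable only after both insertions \emph{and} the string ends up back in the initial/final component, but on a path these two requirements conflict when only four rule applications are available. A round trip from the initial node to the node carrying the deletion and back along a path costs twice the distance in transitions; if the deletion sits at depth three (so that both insertions must precede it) the trip needs six steps, not four. Every four-step layout either places the deletion at a node already visited before both insertions have fired---so $XK_{p}$ can be removed prematurely, erasing the unique $N'$-symbol and yielding terminal strings outside $L(G)$---or strands the string at an interior node with no marker left to serve as context for a return move. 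Your soundness sketch (``any deviation is fatal'') does not cover the first failure mode: a string of the form $uXv$ with $X\in N'$ turned into $uv$ can very well be reduced to a terminal word by the erasing deletions $AB\to\lambda$, $CD\to\lambda$ without that word lying in $L(G)$.

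The paper fills precisely this gap with a \emph{second} marker $p'$ and six steps rather than four. On the outward leg $1\to 2\to 3\to 4$ it inserts $p$ after $X$, then one right-hand symbol after $p$, then $p'$ after $p$; component $4$ performs the length-two deletion of $Xp$. On the return leg $4\to 3\to 2\to 1$ the surviving marker $p'$ supplies the left context for inserting the remaining right-hand symbol at node $3$ and is itself removed by a length-one context-free deletion at node $2$, delivering the string back to component $1$. Your single-marker, four-operation scheme is missing exactly this return mechanism; without it the tree structure on four components cannot be realised.
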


\begin{proof}
Consider a type-0 grammar $G=\left( N,T,P,S\right) $ in the special Geffert
normal form. We construct a graph-controlled insertion-deletion system%
\begin{equation*}
\Pi =(4,V,T,\{S\},H,1,1,R)
\end{equation*}%
that simulates $G$ as follows. The rules from $P$ are supposed to be labeled
in a one-to-one manner with labels from the set $\left[ 1..|P|\right] $. The
alphabet of $\Pi $ is $V=N\cup T\cup \{p,p^{\prime }\mid p:X\rightarrow
\alpha \in P\}$. The set of rules $R$ of $\Pi $ is defined as follows:

For any rule $p:X\rightarrow bY$ we take the following insertion and
deletion rules into $R$:%
\begin{align*}
p.1.1& :\left( 1,\left( X,p,\lambda \right) _{ins},2\right) & & \\
p.2.1& :\left( 2,\left( p,Y,\lambda \right) _{ins},3\right) & p.2.2& :\left(
2,\left( \lambda ,p^{\prime },\lambda \right) _{del},1\right) \\
p.3.1& :\left( 3,\left( p,p^{\prime },\lambda \right) _{ins},4\right) & 
p.3.2& :\left( 3,\left( p^{\prime },b,\lambda \right) _{ins},2\right) \\
p.4.1& :\left( 4,\left( \lambda ,Xp,\lambda \right) _{del},3\right) \hspace*{%
0.7cm} & &
\end{align*}

For any rule $p:X\rightarrow Yb$ we take the following insertion and
deletion rules into $R$:%
\begin{align*}
p.1.1& :\left( 1,\left( X,p,\lambda \right) _{ins},2\right) & & \\
p.2.1& :\left( 2,\left( p,b,\lambda \right) _{ins},3\right) & p.2.2& :\left(
2,\left( \lambda ,p^{\prime },\lambda \right) _{del},1\right) \\
p.3.1& :\left( 3,\left( p,p^{\prime },\lambda \right) _{ins},4\right) & 
p.3.2& :\left( 3,\left( p^{\prime },Y,\lambda \right) _{ins},2\right) \\
p.4.1& :\left( 4,\left( \lambda ,Xp,\lambda \right) _{del},3\right) \hspace*{%
0.7cm} & &
\end{align*}

For simulating the erasing productions $AB\rightarrow \lambda $ and $%
CD\rightarrow \lambda $ as well as $S^{\prime }\rightarrow \lambda $ we add
the rules $\left( 1,\left( \lambda ,AB,\lambda \right) _{del},1\right) $ and 
$\left( 1,\left( \lambda ,CD,\lambda \right) _{del},1\right) $ as well as $%
\left( 1,\left( \lambda ,S^{\prime },\lambda \right) _{del},1\right) $ to $R$%
..

We claim that $L(\Pi )=L(G)$. We start by proving the inclusion $%
L(G)\subseteq L(\Pi ).$ Let $S\Longrightarrow ^{\ast }uXv\Longrightarrow
ubYv\Longrightarrow ^{\ast }w$ be a derivation of a string $w\in L(G)$. We
show that $\Pi $ correctly simulates the application of the rule $%
p:X\rightarrow bY$. Consider the string $uXv$ in component~$1$. Then, there
is only one possible sequence of applications of rules in $\Pi $: 
\begin{multline*}
\left( 1,uXv\right) \Rrightarrow _{p.1.1}\left( 2,uXpv\right) \Rrightarrow
_{p.2.1}\left( 3,uXpYv\right) \Rrightarrow _{p.3.1}\left( 4,uXpp^{\prime
}Yv\right) \\
\Rrightarrow _{p.4.1}\left( 3,up^{\prime }Yv\right) \Rrightarrow
_{p.3.2}\left( 2,up^{\prime }bYv\right) \Rrightarrow _{p.2.2}\left(
1,ubYv\right) .
\end{multline*}

In a similar way the rules $X\rightarrow Yb$ are simulated: 
\begin{multline*}
\left( 1,uXv\right) \Rrightarrow _{p.1.1}\left( 2,uXpv\right) \Rrightarrow
_{p.2.1}\left( 3,uXpbv\right) \Rrightarrow _{p.3.1}\left( 4,uXpp^{\prime
}bv\right) \\
\Rrightarrow _{p.4.1}\left( 3,up^{\prime }bv\right) \Rrightarrow
_{p.3.2}\left( 2,up^{\prime }Ybv\right) \Rrightarrow _{p.2.2}\left(
1,uYbv\right) .
\end{multline*}

The rules $AB\rightarrow \lambda $ and $CD\rightarrow \lambda $ as well as $%
S^{\prime }\rightarrow \lambda $ are directly simulated by the corresponding
deletion rules $\left( 1,\left( \lambda ,AB,\lambda \right) _{del},1\right) $
and $\left( 1,\left( \lambda ,CD,\lambda \right) _{del},1\right) $ as well
as $\left( 1,\left( \lambda ,S^{\prime },\lambda \right) _{del},1\right) $
from $R$ in component $1$.

Hence, observing that initially $S$ is present in component~$1$, and by
applying the induction argument we obtain that there is a derivation $\left(
1,S\right) \Longrightarrow ^{\ast }\left( 1,w\right) $ in $\Pi $. Thus, we
conclude $L(G)\subseteq L(\Pi )$. For the converse inclusion, it is
sufficient to observe that any computation in $\Pi $ can only be performed
by applying the group of rules corresponding to a production of $G$. Thus,
for any derivation in $\Pi $ a corresponding derivation in $G$ can be
obtained.

Finally we observe that the rules in $R$ induce an even linear structure for
the communication graph, which for short can be represented as follows: 
\begin{equation*}
\circlearrowright 1\rightleftarrows 2\rightleftarrows 3\rightleftarrows 4
\end{equation*}

This observation concludes the proof.
\end{proof}

\bigskip

The next theorem uses one-sided contextual deletion rules.

\begin{theorem}
\label{thm:110110} $GCL_{4}(ins_{1}^{1,0},del_{1}^{1,0})=RE.$
\end{theorem}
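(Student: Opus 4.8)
The plan is to mirror the construction and overall strategy used for Theorem~\ref{thm:110200}: simulate a grammar $G=(N,T,P,S)$ in the special Geffert normal form by a graph-controlled insertion-deletion system $\Pi$ with four components, where the initial and final component coincide (component~$1$). The global invariant will again be that whenever the simulation of a production of $G$ is idle, the current string sits in component~$1$ and equals a sentential form of $G$; each production of $G$ will be simulated by a short cycle of rule applications that visits components~$2$, $3$, $4$ and returns to~$1$. The essential difference from the previous theorem is that now deletion must be performed with \emph{one-sided contextual} rules of size $(1,1,0)$ (single symbol deleted, one symbol of left context, no right context) rather than with context-free deletion of two symbols. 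So the whole design question is how to delete a symbol \emph{safely}, i.e. only the intended occurrence, using a left context of length one.

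First I would fix, for each production $p:X\rightarrow bY$ (and symmetrically $p:X\rightarrow Yb$), a group of rules patterned on the earlier proof. The key idea is that the auxiliary symbols $p,p^{\prime}$ introduced for production $p$ serve simultaneously as \emph{markers} and as \emph{left contexts}: I insert $p$ immediately to the right of $X$ (rule $(X,p,\lambda)_{ins}$ in component~$1$, moving to~$2$), which both records which production is being applied and makes the pair $Xp$ locally identifiable. The subsequent insertions of $Y$ and $b$ are anchored to the right of $p$ or $p^{\prime}$ exactly as before, so that the relative order of the new symbols is forced. The crucial new step is the removal of $X$ and of the markers: instead of a context-free two-symbol deletion $(\lambda,Xp,\lambda)_{del}$, I must delete $X$ using a rule whose left context is the symbol that legitimately precedes $X$ in the sentential form, and delete $p$ (and $p^{\prime}$) using the marker itself as left context, e.g. $(X,p,\lambda)_{del}$ or $(p^{\prime},b,\lambda)_{del}$-style anchors. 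Because at most one nonterminal from $N^{\prime}$ occurs in any sentential form, and because each marker symbol $p$ is unique to its production and appears at most once, these one-symbol left contexts will pinpoint the correct occurrence, guaranteeing that no spurious deletion can occur elsewhere in the string.

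For the two non-context-free erasing rules $AB\rightarrow\lambda$ and $CD\rightarrow\lambda$, which in the previous theorem were done by a single context-free deletion of length two, I now have to implement each as a two-step deletion: first delete one of the two symbols using the other as left context (for instance $(A,B,\lambda)_{del}$ deletes $B$ with left context $A$), and then delete the remaining symbol using \emph{its} legitimate left context. The second deletion is the delicate one, since after removing $B$ the symbol $A$ must be deleted with a left context of length one that is guaranteed to be whatever stands to the left of $A$ in the string; here I would route this two-step deletion through the auxiliary components so that the intermediate configuration is trapped (cannot exit to component~$1$) until both symbols are gone, and I would exploit the stage structure of the special Geffert normal form, where during the erasing stage the string has the special shape that makes the adjacency of $A,B$ (respectively $C,D$) well-controlled. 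Assembling the groups into four components and verifying that the only computations reaching component~$1$ with a terminal string correspond to genuine $G$-derivations gives both inclusions $L(G)\subseteq L(\Pi)$ and $L(\Pi)\subseteq L(G)$.

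The main obstacle I expect is precisely the safe deletion of a symbol that has \emph{no} reliable left-context neighbour of length one — most acutely the deletion of the leftmost of a pair such as $A$ in $AB\rightarrow\lambda$, and the clean removal of the marker pair $Xp$ without a right context to anchor against. The trick that makes it work is to never delete a symbol whose required context is not already present: I would always arrange the order of operations so that the marker $p$ or $p^{\prime}$ is still in place to serve as the left context for the next deletion, deleting the markers last and in an order where each one is shielded by the previous one; and for the $A$/$C$ deletions I would rely on the invariant that these symbols are always immediately preceded by a known symbol (another $A$/$C$ or the middle symbol) during the erasing stage, using a suitable auxiliary marker if necessary. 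Confirming that every partially-completed simulation is \emph{trapped} in components $2$--$4$ (so no illegitimate intermediate string can be exported as output) is the verification step that carries the real content of the proof, and is where I would spend most of the care.
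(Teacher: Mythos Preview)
Your proposal has a genuine gap: by inserting the marker $p$ \emph{after} $X$ via $(X,p,\lambda)_{ins}$ you place it on the wrong side for deletions of type $(1,1,0)$. You then say you will ``delete $X$ using a rule whose left context is the symbol that legitimately precedes $X$'', but that neighbour ranges over all of $T\cup N''$ and, worse, carries no link to the production $p$ being simulated; a family of rules $(a,X,\lambda)_{del}$ would fire regardless of whether the marker $p$ is adjacent, decoupling the deletion of $X$ from the insertion of $bY$. Your suggestion to ``arrange the order of operations so that the marker $p$ or $p'$ is still in place to serve as the left context for the next deletion'' cannot help here, because $p$ sits to the \emph{right} of $X$ and so can never be its left context. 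The same structural problem bites in your $AB\rightarrow\lambda$ plan: after $(A,B,\lambda)_{del}$ you must delete $A$, but its left neighbour is uncontrolled (and may be empty when $A$ is leftmost); the vague ``suitable auxiliary marker if necessary'' is precisely the missing idea, not a detail.

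The paper's trick is to reverse the geometry: insert the marker $p$ (respectively $K$) \emph{context-free}, i.e.\ by $(\lambda,p,\lambda)_{ins}$, and then let the next component test whether it landed immediately to the \emph{left} of $X$ via the contextual deletion $(p,X,\lambda)_{del}$. If it did not, an escape rule $(\lambda,p,\lambda)_{del}$ returns to component~$1$ unchanged. Once $p$ is the left neighbour, it serves as the left context for every subsequent step: delete $X$, insert $Y$, insert $b$, all anchored by $p$, and finally remove $p$ itself context-free. The $AB$/$CD$ erasures are handled identically with a marker $K$: insert $K$ freely, then $(K,A,\lambda)_{del}$ followed by $(K,B,\lambda)_{del}$, and drop $K$. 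This guess-and-verify placement of a left-side marker is the key device you are missing; without it, four components do not suffice to coordinate the deletions safely.
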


\begin{proof}
Consider a type-0 grammar $G=\left( N,T,P,S\right) $ in the special Geffert
normal form. We construct a graph-controlled insertion-deletion system%
\begin{equation*}
\Pi =(4,V,T,\{S\},H,1,1,R)
\end{equation*}%
that simulates $G$ as follows. The rules from $P$ are supposed to be labeled
in a one-to-one manner with labels from the set $\left[ 1..|P|\right] $. The
alphabet of $\Pi $ is $V=N\cup T\cup \{K,K^{\prime }\}\cup \{p \mid p:X\rightarrow \alpha \in P\}$. The set of rules $R$ of $\Pi $ is
defined as follows.

For any rule $p:X\rightarrow bY$ we take the following insertion and
deletion rules into $R$:%
\begin{align*}
p.1.1& :\left( 1,\left( \lambda ,p,\lambda \right) _{ins},2\right) & & \\
p.2.1& :\left( 2,\left( p,X,\lambda \right) _{del},3\right) \hspace*{0.7cm}
& p.2.2& :\left( 2,\left( \lambda ,p,\lambda \right) _{del},1\right) \\
p.3.1& :\left( 3,\left( p,Y,\lambda \right) _{ins},4\right) & & \\
p.4.1& :\left( 4,\left( p,b,\lambda \right) _{ins},2\right) & &
\end{align*}

For any rule $p:X\rightarrow Yb$ we take the following insertion and
deletion rules into $R$:%
\begin{align*}
p.1.1& :\left( 1,\left( \lambda ,p,\lambda \right) _{ins},2\right) & & \\
p.2.1& :\left( 2,\left( p,X,\lambda \right) _{del},3\right) \hspace*{0.7cm}
& p.2.2& :\left( 2,\left( \lambda ,p,\lambda \right) _{del},1\right) \\
p.3.1& :\left( 3,\left( p,b,\lambda \right) _{ins},4\right) & & \\
p.4.1& :\left( 4,\left( p,Y,\lambda \right) _{ins},2\right) & &
\end{align*}

For the erasing production $S^{\prime }\rightarrow \lambda $ we have to add
the rule $\left( 1,\left( \lambda ,S^{\prime },\lambda \right)
_{del},1\right) $, and for the erasing productions $AB\rightarrow \lambda $
and $CD\rightarrow \lambda $ we take the following rules into $R$:%
\begin{align*}
k.1.1& :\left( 1,\left( \lambda ,K,\lambda \right) _{ins},2\right) \hspace*{%
0.7cm} & k.1.2& :\left( 1,\left( \lambda ,K^{\prime },\lambda \right)
_{ins},2\right) \\
k.2.1& :\left( 2,\left( K,A,\lambda \right) _{del},3\right) & k.2.2& :\left(
2,\left( K^{\prime },C,\lambda \right) _{del},3\right) \\
k.2.3& :\left( 2,\left( \lambda ,K,\lambda \right) _{del},1\right) & k.2.4&
:\left( 2,\left( \lambda ,K^{\prime },\lambda \right) _{del},1\right) \\
k.3.1& :\left( 3,\left( K,B,\lambda \right) _{del},2\right) & k.3.2& :\left(
3,\left( K^{\prime },D,\lambda \right) _{del},2\right)
\end{align*}

The simulation of a rule $p:X\rightarrow bY$ of $G$ is performed as follows.
Let the current sentential form be $uXv$. There are several possibilities
here. First, the symbol $p$ is inserted in a context-free manner anywhere in
the string by rule $p.1.1$. After that, either rule $p.2.2$ is applicable,
or, if $p$ was inserted before $X$, rule $p.2.1$ is applicable. In the first
case the string remains unchanged: $uXv$. We remark that this is also the
only evolution if a symbol $q\neq p$ is inserted. In the second case, there
is only one possible further evolution, yielding the desired result $ubYv$
in component $1$: 
\begin{equation*}
\left( 2,upXv\right) \Rrightarrow _{p.2.1}\left( 3,upv\right) \Rrightarrow
_{p.3.1}\left( 4,upYv\right) \Rrightarrow _{p.4.1}\left( 2,upbYv\right)
\Rrightarrow _{p.2.2}\left( 1,ubYv\right)
\end{equation*}

In a similar way the rules $X\rightarrow Yb$ are simulated:%
\begin{equation*}
\left( 2,upXv\right) \Rrightarrow _{p.2.1}\left( 3,upv\right) \Rrightarrow
_{p.3.1}\left( 4,upbv\right) \Rrightarrow _{p.4.1}\left( 2,upYbv\right)
\Rrightarrow _{p.2.2}\left( 1,uYbv\right)
\end{equation*}

Now consider the simulation of the rule $AB\rightarrow \lambda $ (the case
of the rule $CD\rightarrow \lambda $ is treated in an analogous way).
Suppose that $K$ is inserted in a context-free manner in string $u$ by rule $%
k.1.1$ and that we obtain a string $u^{\prime }Ku^{\prime \prime }$ in
component $2$. After that, either rule $k.2.1$ is applicable if $K$ was
inserted before $A$, i.e., $u^{\prime }Ku^{\prime \prime }=u^{\prime
}KAu^{\prime \prime \prime }$, and we obtain the string $u^{\prime
}Ku^{\prime \prime \prime }$ in component $3$, or the string $u$ remains
unchanged and returns to component $1$ by applying rule $k.2.3$. In the
first case, if the first letter of $u^{\prime \prime \prime }$ is not equal
to $B$, the evolution of this string is stopped. Otherwise, if $u^{\prime
\prime \prime }=Bu^{iv}$, rule $k.3.1$ is applied and the string $u^{\prime
}Ku^{iv}$ is obtained in component $2$. Now the computation may be continued
in the same manner and $K$ either eliminates another couple of symbols $AB$
if this is possible, or the string appears in component $1$ without $K$ and
then is ready for new evolutions.

Now in order to complete the proof, we observe that the only sequences of
rules leading to a terminal derivation in $\Pi $ correspond to the groups of
rules as defined above. Hence, a derivation in $G$ can be reconstructed from
a derivation in $\Pi $. Finally we remark that in contrast to the preceding
theorem, the communication graph has no tree structure, yet instead looks
like as follows: 
\begin{equation*}
\begin{array}[t]{ccccccccc}
\circlearrowright 1 & \hspace*{0.3cm} & {\LARGE \rightleftarrows } & \hspace*{0.3cm} & 2 & 
\hspace*{0.3cm} & {\LARGE \rightleftarrows } & \hspace*{0.3cm} & 3 \\ 
&  &  &  &  & \nwarrow &  & \swarrow &  \\ 
&  &  &  &  &  & 4 &  & 
\end{array}%
\end{equation*}

These observations conclude the proof.
\end{proof}

\bigskip

The result elaborated above also holds if the contexts for insertion and
deletion rules are on different sides.

\begin{theorem}
\label{thm:110101} $GCL_{4}(ins_{1}^{1,0},del_{1}^{0,1})=RE.$
\end{theorem}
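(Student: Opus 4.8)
The plan is to reuse the construction from the proof of Theorem~\ref{thm:110110} almost verbatim, replacing every context-sensitive \emph{deletion} rule (which there carries a left context) by a deletion rule carrying a right context, and repositioning the auxiliary markers accordingly. Since insertion rules of size $(1,1,0)$ already permit a left context, the insertion rules can be kept essentially unchanged, and all context-free insertions and deletions of markers remain available because they have size $(1,0,0)$. I would again fix a grammar $G=(N,T,P,S)$ in the special Geffert normal form and construct $\Pi=(4,V,T,\{S\},H,1,1,R)$ over $V=N\cup T\cup\{K,K'\}\cup\{p\mid p:X\to\alpha\in P\}$, using the same four components and the same communication graph as before; the rule $S'\to\lambda$ is still simulated by the context-free deletion $(1,(\lambda,S',\lambda)_{del},1)$.

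For a rule $p:X\to bY$ I would keep $p.1.1:(1,(\lambda,p,\lambda)_{ins},2)$, $p.2.2:(2,(\lambda,p,\lambda)_{del},1)$, $p.3.1:(3,(p,Y,\lambda)_{ins},4)$ and $p.4.1:(4,(p,b,\lambda)_{ins},2)$ exactly as in Theorem~\ref{thm:110110}, and only replace the deletion $p.2.1$ by its mirror image $p.2.1:(2,(\lambda,X,p)_{del},3)$. After $p$ has been inserted context-freely, the deletion of $X$ now demands the pattern $Xp$ (that is, $p$ lying immediately to the right of $X$) instead of $pX$; whenever $p$ lands elsewhere, only $p.2.2$ is applicable and the string returns unchanged to component~$1$. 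The single productive path then reads $(1,uXv)\Rrightarrow(2,uXpv)\Rrightarrow(3,upv)\Rrightarrow(4,upYv)\Rrightarrow(2,upbYv)\Rrightarrow(1,ubYv)$, and the rules $X\to Yb$ are handled in the same way after swapping the order of the two insertions in components~$3$ and~$4$.

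For the erasing rules I would mirror the marker mechanism so that $K$ (resp.\ $K'$) erases the two symbols lying to its left rather than to its right. Keeping $k.1.1,k.1.2$ (context-free insertion of $K,K'$) and $k.2.3,k.2.4$ (context-free removal of $K,K'$), I would set $k.2.1:(2,(\lambda,B,K)_{del},3)$, $k.2.2:(2,(\lambda,D,K')_{del},3)$, $k.3.1:(3,(\lambda,A,K)_{del},2)$ and $k.3.2:(3,(\lambda,C,K')_{del},2)$. A productive simulation of $AB\to\lambda$ then inserts $K$ immediately to the right of the boundary occurrence of $B$, yielding the pattern $ABK$; component~$2$ deletes $B$ via the right context $K$, component~$3$ deletes $A$ via the right context $K$, and $K$ comes to rest between the new boundary symbols, from where --- no deletable $B$ or $D$ being to its left --- it is removed context-freely and the string returns to component~$1$.

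The step I expect to demand the most care is the correctness of the erasing phase, precisely because reversing the contexts forces each pair to be deleted in the order $B$-then-$A$ (resp.\ $D$-then-$C$) rather than $A$-then-$B$. Following the original proof, I would verify that (i) if $K$ is placed next to a non-boundary symbol, then after at most one deletion the computation either blocks in component~$3$, no matching $A$ or $C$ being available to the left of $K$, or $K$ is immediately erased in component~$2$, so that no spurious derivation survives; and (ii) a string over $T$ can reach component~$1$ only if every occurrence of $A,B,C,D$ was removed in properly matched adjacent pairs, which is exactly the balancing guaranteed by the special Geffert normal form. Since the target components of all rules are unchanged, the communication graph coincides with the non-tree four-node graph of Theorem~\ref{thm:110110}, and the size bounds are met, as every insertion has right context $\lambda$ and every deletion has left context $\lambda$, giving $ins_1^{1,0}$ and $del_1^{0,1}$.
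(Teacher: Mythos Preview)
Your proposal is correct and follows essentially the same approach as the paper: the paper's proof is a three-line remark that one should replace $p.2.1$ by $(2,(\lambda,X,p)_{del},3)$ and the four $k$-rules by their symmetric (right-context) versions, noting that the resulting derivations differ from those of Theorem~\ref{thm:110110} only in where the context-freely inserted marker lands. Your write-up is in fact more explicit than the paper's, since you spell out that mirroring the erasing mechanism forces the deletion order $B$-then-$A$ (resp.\ $D$-then-$C$), a detail the paper leaves implicit in the phrase ``symmetric versions''.
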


\begin{proof}
We modify the proof of Theorem~\ref{thm:110110} as follows. We replace the
rules $p.2.1$ by the corresponding rules $(2,\left( \lambda ,X,p\right)
_{del},3)$ and the rules $k.2.1$, $k.2.2$, $k.3.1$, and $k.3.2$ by their
symmetric versions. In this case we get a derivation which differs from the
derivation of the previous theorem only by the position of the deleting
symbol, which is inserted in a context-free manner. Hence, the derivations
are equivalent and lead to the same result.
\end{proof}

\bigskip

Finally, we prove that a similar result also holds in the case of
context-free insertions.

\begin{theorem}
\label{thm:200110} $GCL_{4}(ins_{2}^{0,0},del_{1}^{1,0})=RE.$
\end{theorem}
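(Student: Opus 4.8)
The plan is to reuse the construction of Theorem~\ref{thm:110110} almost verbatim, exploiting the fact that the deletion resources $del_1^{1,0}$ are exactly the same in both statements; in particular the context-free deletion of a single symbol $(\lambda,z,\lambda)_{del}$ (left context of length $0\le 1$) is still available. The only capability that changes is insertion: we trade the left context of $ins_1^{1,0}$ for a second inserted symbol, obtaining $ins_2^{0,0}$. Starting from a grammar $G$ in the special Geffert normal form, I would again use, for each production $p$, a \emph{production-specific} marker $p$ that is inserted immediately to the left of the current nonterminal $X$. Since every sentential form carries at most one symbol of $N'$, the rule $(p,X,\lambda)_{del}$ can fire only when that unique symbol is exactly $X$ and $p$ sits right in front of it; this simultaneously checks the left-hand side, fixes the position, and deletes $X$. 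After that I rebuild the right-hand side to the right of $p$ and finally erase the marker by the context-free rule $(\lambda,p,\lambda)_{del}$. The erasing productions $AB\to\lambda$, $CD\to\lambda$ and $S'\to\lambda$ are simulated exactly as in Theorem~\ref{thm:110110}, because the rules used there already lie in $ins_2^{0,0}del_1^{1,0}$.

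The one genuinely new ingredient is a gadget simulating the two \emph{contextual} insertions $(p,z,\lambda)_{ins}$ that the earlier proof used to place the right-hand side next to $p$. Since insertion is now context-free, I would replace each such step by first inserting the two-symbol string $p'z$ via $(\lambda,p'z,\lambda)_{ins}$, with $p'$ a fresh production-specific \emph{checker}, and then deleting the checker by $(p,p',\lambda)_{del}$. As $p$ occurs exactly once, the block $p'z$ can be validated only when it landed immediately to the right of $p$, leaving $z$ right after $p$; any other landing position leaves the nonterminal $p'$ with a wrong left neighbour, so $p'$ can never be removed and no terminal string results. Concretely, for $p:X\to bY$ I would use four components: component~$1$ is the home node inserting $p$; component~$2$ is a deletion hub carrying $(p,X,\lambda)_{del}$, the two checker deletions $(p,p',\lambda)_{del}$ and $(p,p'',\lambda)_{del}$, and the final $(\lambda,p,\lambda)_{del}$; components~$3$ and~$4$ carry the context-free insertions $(\lambda,p'Y,\lambda)_{ins}$ and $(\lambda,p''b,\lambda)_{ins}$. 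The hub sends the string to component~$3$ after deleting $X$, to component~$4$ after deleting the first checker $p'$, loops on itself after deleting the second checker $p''$, and returns to component~$1$ after erasing $p$. The rule $X\to Yb$ is handled identically with the order of the two inserted symbols exchanged, and the induced communication graph has four nodes (and, unlike Theorem~\ref{thm:110200}, no tree structure).

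The step I expect to be the main obstacle is \emph{soundness}, i.e.\ the inclusion $L(\Pi)\subseteq L(G)$. Because insertion is context-free, the inserted blocks may land anywhere and the hub rule $(\lambda,p,\lambda)_{del}$ may be applied prematurely, so one cannot just read off a unique derivation. The points I would have to establish are: (i) keeping the two insertions in \emph{distinct} components~$3$ and~$4$ forces the two symbols of the right-hand side to be produced in the correct order, since producing them in the wrong order routes the string into the hub with no applicable rule; (ii) the production-specific marker $p$ together with its checkers $p',p''$ prevents the rule groups of two different productions from being interleaved, as a checker can be deleted only next to the marker of its own production and an alien checker is never eliminated; and (iii) every premature use of $(\lambda,p,\lambda)_{del}$ either strands a nonterminal checker that can no longer be removed, blocking termination, or merely forces the marker $p$ to be reinserted and reconverges to the same correct string. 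Once these case distinctions are carried out, the only computations leading from component~$1$ back to component~$1$ with a clean sentential form are exactly the rule groups described above, so each successful computation of $\Pi$ mirrors a derivation of $G$; together with the straightforward converse inclusion this yields $L(\Pi)=L(G)$ and hence $GCL_4(ins_2^{0,0},del_1^{1,0})=RE$.
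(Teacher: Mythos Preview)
Your soundness argument has a genuine gap: claim~(iii) is false. Because your hub deletes the marker by the \emph{context-free} rule $(\lambda,p,\lambda)_{del}$ and component~$1$ can re-insert $p$ anywhere, the marker can effectively be \emph{moved}, decoupling the inserted right-hand side from the original site of $X$. Concretely, for $p:X\to bY$ and $v\neq\lambda$ your rules allow
\begin{align*}
(1,uXv)&\Rrightarrow(2,upXv)\Rrightarrow(3,upv)\Rrightarrow(2,upvp'Y)\Rrightarrow(1,uvp'Y)\\
&\Rrightarrow(2,uvpp'Y)\Rrightarrow(4,uvpY)\Rrightarrow(2,uvpp''bY)\Rrightarrow(2,uvpbY)\Rrightarrow(1,uvbY),
\end{align*}
where in component~$3$ the block $p'Y$ is dropped at the right end, $p$ is then context-free deleted, and $p$ is re-inserted just before $p'$. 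You reach the clean configuration $(1,uvbY)$, yet $uvbY$ is not obtainable from $uXv$ by any rule of $G$; the computation can now continue and terminate, so $L(\Pi)\not\subseteq L(G)$. The same trick with a misplaced $p''b$ yields $uYvb$ from $uXv$. Thus premature deletion of $p$ neither blocks nor ``reconverges to the same correct string''.

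The paper's construction avoids this by never deleting the marker context-free. For $p:X\to bY$ it inserts the \emph{pair} $bp$ in one step (so $b$ is fixed immediately left of $p$), deletes $X$ via $(p,X,\lambda)_{del}$, inserts the single symbol $Y$ context-free, and finally removes the marker by the \emph{contextual} rule $(Y,p,\lambda)_{del}$. This last deletion simultaneously checks that $Y$ landed immediately to the left of $p$ and erases $p$; a misplaced $Y$ simply blocks the derivation in component~$2$, because no hub rule applies, and there is no repair by relocating the marker. The idea you are missing is precisely this: use the freshly inserted symbol as the left context when deleting the marker, so that the deletion itself performs the position check. With that change only one auxiliary symbol $p$ per production (no $p',p''$) and no self-loop at the hub are needed, and the communication graph is the linear tree $1\rightleftarrows 2\rightleftarrows 3\rightleftarrows 4$.
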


\begin{proof}
Consider a type-0 grammar $G=\left( N,T,P,S\right) $ in the special Geffert
normal form. We construct a graph-controlled insertion-deletion system%
\begin{equation*}
\Pi =(4,V,T,\{S\},H,1,1,R)
\end{equation*}%
that simulates $G$ as follows. The rules from $P$ are supposed to be labeled
in a one-to-one manner with labels from the set $\left[ 1..|P|\right] $. The
alphabet of $\Pi $ is $V=N\cup T\cup \{K,K^{\prime }\}\cup \{p \mid p:X\rightarrow \alpha \in P\}$. The set of rules $R$ of $\Pi $ is
defined as follows.

For any rule $p:X\rightarrow bY$ we take the following insertion and
deletion rules to $R$ (we stress that only one symbol $Y$ is present in the
developing string):%
\begin{align*}
p.1.1& :\left( 1,\left( \lambda ,bp,\lambda \right) _{ins},2\right) & & \\
p.2.1& :\left( 2,\left( p,X,\lambda \right) _{del},3\right) & p.2.2& :\left(
2,\left( Y,p,\lambda \right) _{del},1\right) , \\
p.3.1& :\left( 3,\left( \lambda ,Y,\lambda \right) _{ins},2\right) & &
\end{align*}

For any rule $p:X\rightarrow Yb$ we take the following insertion and
deletion rules into $R$:%
\begin{align*}
p.1.1& :\left( 1,\left( \lambda ,Yp,\lambda \right) _{ins},2\right) \hspace*{%
0.7cm} & & \\
p.2.1& :\left( 2,\left( p,X,\lambda \right) _{del},3\right) & p.2.2& :\left(
2,\left( b,p,\lambda \right) _{del},1\right) \\
p.3.1& :\left( 3,\left( \lambda ,b,\lambda \right) _{ins},2\right) & &
\end{align*}

For the erasing production $S^{\prime }\rightarrow \lambda $ we have to add
the rule $\left( 1,\left( \lambda ,S^{\prime },\lambda \right)
_{del},1\right) $; the erasing rules $AB\rightarrow \lambda $ and $%
CD\rightarrow \lambda $ are simulated by the following rules in $R$:%
\begin{align*}
k.1.1& :\left( 1,\left( \lambda ,K,\lambda \right) _{ins},2\right) \hspace*{%
0.7cm} & k.2.1& :\left( 1,\left( \lambda ,K^{\prime },\lambda \right)
_{ins},2\right) \\
k.2.1& :\left( 2,\left( K,A,\lambda \right) _{del},3\right) & k.2.2& :\left(
2,\left( K^{\prime },C,\lambda \right) _{del},3\right) \\
k.3.1& :\left( 3,\left( K,B,\lambda \right) _{del},4\right) & k.3.2& :\left(
3,\left( K^{\prime },D,\lambda \right) _{del},4\right) \\
k.4.1& :\left( 4,\left( \lambda ,K,\lambda \right) _{del},1\right) & k.4.2&
:\left( 4,\left( \lambda ,K^{\prime },\lambda \right) _{del},1\right)
\end{align*}

The simulation of a rule $X\rightarrow bY$ is performed as follows: 
\begin{equation*}
\left( 1,uXv\right) \Rrightarrow _{p.1.1}\left( 2,ubpXv\right) \Rrightarrow
_{p.2.1}\left( 3,ubpv\right) \Rrightarrow _{p.3.1}\left( 2,ubYpv\right)
\Rrightarrow _{p.2.2}\left( 1,ubYv\right) 
\end{equation*}

Since the rules $p.1.1$ and $p.3.1$ perform a context-free insertion, the
corresponding string can be inserted anywhere. However, if it is not
inserted at the right position, then the computation is immediately blocked,
because the corresponding deletion cannot be performed.

The simulation of a rule $X\rightarrow Yb$ is performed in a similar way: 
\begin{equation*}
\left( 1,uXv\right) \Rrightarrow _{p.1.1}\left( 2,uYpXv\right) \Rrightarrow
_{p.2.1}\left( 3,uYpv\right) \Rrightarrow _{p.3.1}\left( 2,uYbpv\right)
\Rrightarrow _{p.2.2}\left( 1,uYbv\right) 
\end{equation*}%
Observe that $p.2.2$ cannot be used instead of $p.2.1$ because $Y\neq b$.

The erasing rule $AB\rightarrow \lambda \ $is simulated as follows (the
construction for $CD\rightarrow \lambda $ is very similar, using $K^{\prime }
$ instead of $K$, so we omit it here):%
\begin{equation*}
\left( 1,uABv\right) \Rrightarrow _{k.1.1}\left( 2,uKABv\right) \Rrightarrow
_{k.2.1}\left( 3,uKBv\right) \Rrightarrow _{k.3.1}\left( 4,uKv\right)
\Rrightarrow _{k.4.1}\left( 1,uv\right) 
\end{equation*}

The communication graph is identical to the graph in the proof of Theorem 1:
\begin{equation*}
\circlearrowright 1\rightleftarrows 2\rightleftarrows 3\rightleftarrows 4
\end{equation*}

Finally, we remark that only simulations of rules from $G$ as described
above may be part of derivations in $\Pi $ yielding a terminal string;
hence, we conclude $L\left( \Pi \right) =L\left( G\right) $. 
\end{proof}

\section{Conclusions}

In this article we have investigated the application of the mechanism of a
control graph to the operations of insertion and deletion. We gave a clear
definition of the corresponding systems, which is simpler than the one
obtained by using P systems. We investigated the case of systems with
insertion and deletion rules of size $(1,1,0;1,1,0)$, $(1,1,0;1,0,1)$, $%
(1,1,0;2,0,0)$ and $(2,0,0;1,1,0)$ and we have shown that the corresponding
graph-controlled insertion-deletion systems are computationally complete
with only four components, i.e., with the underlying communication graph
containing only four nodes. The case of graph-controlled systems having
rules of size $(2,0,0;2,0,0)$ is investigated in~\cite{KRV09}, where it is
shown that such systems are not computationally complete.

We suggest two directions for the future research. The first one deals with
the number of components needed to achieve computational completeness. The
natural question is if it is possible to obtain similar results with only
three components. The second direction is inspired from the area of P
systems. We propose to further investigate systems where the communication
graph has a tree structure as in Theorem~\ref{thm:110200}. The only known
results so far are to be found in~\cite{KRV09}, but there five nodes were
used. Hence, the challenge remains to decrease these numbers of components.

\bibliographystyle{eptcs}

\end{document}